\def \afig#1#2#3 {\begin{figure}\{htbp\}
\begin{center} 
\mbox{\psfig{file=#1.eps,width=#3}}
\end{center}
\caption{#2}
\label{fig:#1}
\end{figure}}
\def \bfig#1#2 {\begin{figure}\{hhh\}
\begin{center} 
\mbox{\psfig{file=#1.eps}}
\end{center}
\caption{#2}
\label{fig:#1}
\end{figure}}
\def\bbbn{\rm I\!N}
\begin{document}

\pagenumbering{empty}
\pagestyle{headings}
\mainmatter


\title{Bounds and approximation results for scheduling coupled-tasks with compatibility constraints}

\author{R. Giroudeau\inst{1},  J.C K\"{o}nig \inst{1}, B. Darties\inst{2} and G. Simonin \inst{3}}

\institute{
LIRMM UMR 5506, 161 rue Ada 34392, Montpellier France\\
\email{$\{$rgirou,konig$\}$@lirmm.fr}
\and
LE2I UMR6306, Univ. Bourgogne Franche-Comt\'e, F-21000 Dijon, France\\
\email{benoit.darties@u-bourgogne.fr}
\and
Insight Centre for Data Analytics, University College Cork, Ireland\\
\email{gilles.simonin@insight-centre.org}
}

\maketitle

\index{B. Darties}\index{R. Giroudeau}\index{J.C K\"{o}nig}\index{G. Simonin}

\noindent


\begin{abstract}
This article is devoted to propose some lower and upper bounds for the coupled-tasks scheduling problem in presence of compatibility constraints  according to classical complexity hypothesis  ($\mathcal{P} \neq \mathcal{NP}$, $\mathcal{ETH}$). Moreover, we develop  an efficient polynomial-time approximation algorithm for the specific case for which the topology describing the compatibility constraints is a quasi split-graph.\newline
\textbf{Keywords:} coupled-task, compatibility graph, complexity, approximation.

 \end{abstract}
 \noindent

\section{Introduction, motivations, model}

We consider in this paper the coupled-task scheduling problem subject to compatibility constraints. The motivation of this model is related to data acquisition processes using radar sensors: a sensor emits a radio pulse (first sub-task $a_i$), and listen for an echo reply (second sub-task $b_i$). To make the notation less cluttered, the processing time of a sub-task will be denoted by $a_i$ instead of $p_{a_i}$ used in the theory of scheduling. Between these two instants (emission and reception), clearly there is an idle time $L_i$ due to the propagation, in both sides, of the radio pulse. A coupled-task $(a_i, L_i, b_i)$, introduced by \citeasnoun{Shapiro}, is a natural way to model such data acquisition. This model has been widely studied in several works,\texttt{ i.e.} \citeasnoun {BEKPTW09}. Other works proposed a generalization of this model by including compatibility constraints: scheduling a sub-task  during the idle time of another requires that both tasks are compatible.  The relations of compatibility are modeled by a compatibility graph $G$, linking pair of compatible tasks only. This model is detailed in \citeasnoun{sdgk11journal}. In previous works, we studied the complexity of scheduling coupled-tasks with compatible constraints under several parameters like the size of the sub-tasks or the class  of the compatibility graph~\cite{sgk13}. 

In this work, we propose original complexity and approximation results for the problem of scheduling \textit{stretched} coupled-task with compatibility constraints. A \textit{stretched} coupled-tasks $i$ is a coupled-task having both sub-tasks processing time and idle time equal to a triplet $(\alpha(i),\alpha(i),\alpha(i))$, where  $\alpha(i)$ is the $\textit{stretch factor}$ of the task $i$ - one can apply a stretch factor $\alpha(i)$ to a reference task $(1,1,1)$ to obtain $i$ -. 

The objective is to minimize the makespan $C_{max}$. 
The input of the problem is a collection of coupled-tasks $\mathcal{T}=\{t_1, t_2, \dots t_n$\}, a stretch factor function $\alpha : \mathcal{T}\rightarrow \bbbn$, and a compatibility graph $G_c=(\mathcal{T},E)$ where edge from $E$ link pairs of compatible tasks only.  When dealing with stretched coupled-tasks only,  a edge $\{x,y\} \in E$ exists if $\alpha(x) = \alpha(y)$ (then $x$ and $y$ can be scheduled together without idle time as the idle time of one task is employed to schedule the sub-task of the other, thus we can schedule sequentially $a_x, a_y, b_x, b_y$ - or $a_y, a_x, b_y, b_x$ - in $\frac{4\alpha(x)}{3}$ time units), or if $3\alpha(x) \leq \alpha(y)$ (then $x$ can be entirely executed during the idle time of $y$ \textit{i.e.} $a_y, a_x, b_x, b_y$ and scheduling both tasks requires  $3\alpha(y)$ time units). We note $\#(X)$ the number of different stretch factors in a set of tasks $X$, and we note $d_{G}(X)$ the maximum degree of any vertex $x\in X$ in a graph $G_c$. 

We use the well-known Graham notation \cite{GLLRK79} to define the problems presented in this paper. In this work, we propose new complexity and inapproximability results when the compatibility graph is a restricted  $1-stage~bipartite$ graph $G=(X,Y,E)$, \texttt{i.e.} a bipartite graph where edges are oriented from $X$ to $Y$ only. Then we show the problem is $\mathcal{NP}$-complete on a quasi-split graph $G=(G_X,G_Y,E)$\footnote{A quasi split graph is a connected graph $G=(G_X,G_Y,E)$, with $G_X$ a connected non-oriented graph (not complete)  and $G_Y$ a independent set. The other arcs are oriented from $X$ to $Y$ only.} even if $\#(V(G_X))=1$ and $\#(V(G_Y))=1$, but is $5/4$-approximable. 

\section{Complexity and approximation results}

\begin{theorem}
\label{bipartiinapprox}
  Deciding whether an instance of
   $1|\alpha, G_c =$  $1-stage-bipartite,$ $ \#(X)=2,\#(Y)=1,d_{G_c}(X) \in \{1,2\}, d_{G_c}(Y)\in \{3,4\} | C_{max}$ is a problem hard to approximate within $\frac{21-\rho^{\textsc{Max-3DM-2}}}{20} \leq \rho$, where $\rho^{\textsc{Max-3DM}}$ gives the upper bound for the \textsc{Max-3DM}. Since $\rho^{\textsc{Max-3DM-2}} \leq \frac{140}{141}$, we obtain $1+\frac{1}{2820}$.
   \end{theorem}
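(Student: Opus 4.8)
The plan is to establish the lower bound by a gap-preserving reduction from \textsc{Max-3DM-2}, the restriction of maximum $3$-dimensional matching in which each element of the three ground sets occurs in at most two triples; this is exactly the problem whose inapproximability constant $\rho^{\textsc{Max-3DM-2}}$ appears in the statement. I start from an instance $I=(A,B,C,\mathcal M)$ with $|A|=|B|=|C|=q$, and write $k^\ast=\mathrm{OPT}_{\textsc{3DM}}(I)$ for the size of a maximum matching; in gap form it is $\mathcal{NP}$-hard to decide whether $k^\ast=q$ or $k^\ast\le\rho^{\textsc{Max-3DM-2}}\,q$. From $I$ I build a scheduling instance $f(I)$ of the required class: each element $e\in A\cup B\cup C$ gives an \emph{element-task} $x_e\in X$, each triple $T\in\mathcal M$ gives a \emph{triple-task} $y_T\in Y$, and there is an arc $x_e\to y_T$ precisely when $e\in T$, so that $d_{G_c}(x_e)\le 2$ and $d_{G_c}(y_T)=3$ automatically. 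I then add a bounded family of auxiliary tasks — a second class of small tasks in $X$ of degree $1$ (which forces $\#(X)=2$ and provides the material that, together with the element-tasks, exactly tiles the idle period of a triple-task) and a few padding triple-like tasks in $Y$ of degree $4$ — and I fix the two stretch values of $X$ and the single stretch value of $Y$ so that (i) the nesting rule recalled in the introduction ($3\alpha(x)\le\alpha(y)$) holds between every $X$-task and every $Y$-task, (ii) the two $X$-classes are mutually incompatible, keeping $G_c$ genuinely $1$-stage-bipartite with arcs from $X$ to $Y$, and (iii) the idle interval of a triple-task is exactly tiled by the three element-tasks of a triple and, alternatively, by the auxiliary combinations used to encode an unpicked triple.

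The core of the argument is a tight affine correspondence $\mathrm{OPT}_{C_{max}}(f(I))=21\,q-k^\ast$. For ``$\le$'', from a matching $M\subseteq\mathcal M$ I build a schedule in which, for each $T\in M$, the element-tasks attached to $T$ are nested inside the idle period of $y_T$ — legal since $e\in T$ implies $x_e\to y_T$, and saturating that period by choice (iii) — while every triple-task not used by $M$ and every remaining element- and auxiliary task is appended as its own block; counting blocks gives makespan $21q-|M|$, hence $\le 21q-k^\ast$. For ``$\ge$'', I normalise an arbitrary schedule by exchange arguments: (a) each triple-task is either \emph{saturated}, its whole idle period covered by nested compatible tasks, or it costs a full separate block; (b) a compatible task nested inside $y_T$ may be assumed to be an element-task (or auxiliary task) attached to $T$; (c) after resolving the finitely many conflicts by local swaps, the saturated triple-tasks project onto a $3$-dimensional matching of $I$ of size at least the number of saturated triple-tasks. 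Hence the makespan is at least $21q$ minus that number, i.e.\ at least $21q-k^\ast$, giving the equality.

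With $\mathrm{OPT}_{C_{max}}(f(I))=21q-k^\ast$ the reduction is gap-preserving: $k^\ast=q$ yields optimal makespan $20q$, $k^\ast\le\rho^{\textsc{Max-3DM-2}}q$ yields optimal makespan $(21-\rho^{\textsc{Max-3DM-2}})q$, so a polynomial-time algorithm approximating the scheduling problem within any ratio strictly below $\frac{21-\rho^{\textsc{Max-3DM-2}}}{20}$ would decide the $\mathcal{NP}$-hard question above; substituting $\rho^{\textsc{Max-3DM-2}}\le\frac{140}{141}$ gives $\frac{21-140/141}{20}=\frac{2821}{2820}=1+\frac{1}{2820}$. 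The main obstacle is the lower bound, and especially step (c): one must show that any optimal — indeed any sufficiently good — schedule can be brought, without increasing $C_{max}$, to the canonical saturated/empty form in which nested tasks are genuine members of their host triple, and one must check that the auxiliary gadget, inserted solely to meet the degree and $\#(\cdot)$ restrictions of the target class, contributes exactly the offset $21q$ and creates no shortcut that would shrink the gap. Pinning down the auxiliary tasks and the three stretch values so that this bookkeeping is exact is the delicate part of the construction.
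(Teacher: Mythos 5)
Your overall architecture is the same as the paper's: a gap-preserving reduction from \textsc{Max-3DM-2} with element-tasks in $X$, one box-task per triple in $Y$, a private degree-one auxiliary $X$-task per triple to realise $\#(X)=2$ and $d_{G_c}(Y)=4$, and an affine relation between the optimal makespan and the maximum matching size yielding the ratio $\frac{21-\rho^{\textsc{Max-3DM-2}}}{20}$. (The paper takes $\alpha=1$ for elements, $\alpha=9$ for boxes and $\alpha=2+\epsilon$ for the auxiliaries, and obtains $C_{max}=63n-3k(1-\epsilon)$ for a matching of size $k$, i.e.\ your $21q-k^\ast$ up to the factor $3$ and the perturbation $\epsilon$.)

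However, the one concrete design principle you commit to is wrong, and it is exactly the point you flag as delicate. Condition (iii) asks that the idle interval of a triple-task be exactly tiled \emph{both} by the three element-tasks of its triple \emph{and} by the auxiliary material encoding an unpicked triple. If both alternatives saturate the idle period, every box saves the same amount whether or not its triple is in the matching, the total makespan becomes independent of $k^\ast$, and the gap collapses. What is needed is the opposite asymmetry: the auxiliary must be strictly \emph{worse} than the three elements --- large enough that no element-task fits beside it in the box's idle time, yet strictly shorter than that idle time. This is why the paper sets $\alpha(t_i')=2+\epsilon$ against an idle period of $9$: a matched box saves $9$ units of standalone time, an unmatched box saves only $6+3\epsilon$, and the difference $3(1-\epsilon)$ per matched triple is precisely the source of the $-3k(1-\epsilon)$ term and hence of the gap $\frac{21-\rho}{20}$ as $\epsilon\to 0$. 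Relatedly, your normalisation step (a) (``saturated, or costs a full separate block'') is not the right dichotomy: a box-task always occupies $3\alpha(y)$ time and may be partially filled, so the lower bound should instead be argued by bounding the total standalone time saved by nesting, observing that saving the full $9$ in a box forces the three elements of its own triple and that the boxes achieving this project onto a matching. Until the stretch values are fixed with this asymmetry, the claimed identity $\mathrm{OPT}=21q-k^\ast$ is unsupported.
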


\begin{proof}

We prove first that the problem is $\mathcal{NP}$-complete via a polynomial-time reduction. Based on this reduction, we apply the gap-preserving reduction.

The proof is based on a reduction from the maximum \textsc{$3$ Dimensional Matching} (\textsc{Max-3DM}) \cite{np}: let $A$, $B$, and $C$ be three disjoint sets of equal size, with $n=|A|=|B|=|C|$, and a set $T \subseteq A \times B \times C$ of triplet, with $|T|=m$. The aim is to find a matching (set of mutually disjoint triplets) $T^* \subseteq T$ of maximum size. This problem is well known to be $\mathcal{NP}$-complete.
The restricted version of this problem in which each element of $A\cup B \cup C$ appears exactly twice is denoted \textsc{Max-3DM-2} and remains $\mathcal{NP}$-complete \cite{Chlebik}. In this restricted version, we have $m=2n$. 


We transform the instance of \textsc{Max-3DM-2} to an instance of $1|\alpha, G_c =  1-stage~bi$ $ partite,$ $\#(X)=2,\#(Y)=1,d_{G_c}(X) \in \{1,2\}, d_{G_c}(Y)\in \{3,4\} | C_{max}=63n- 3k (1-\epsilon)$ as follows: we define a set of tasks $X\cup Y$ and model the compatibility constraint with a graph $G_c=(X,Y,E)$. For each element $x_i \in A\cup B \cup C$, we add an \textit{item} coupled-task $x_i$ into $X$ with $\alpha(x_i)=1$. For each triplet $t_i \in T$, we add a \textit{box} coupled-task $t_i$ to $Y$ with $\alpha(t_i)=9$, and an \textit{item} coupled-task $t'_i$ with  $\alpha(t'_i)=2+\epsilon$. For each $t_i\in T$ and each $x_i \in t_i$,  we add the compatibility arc $(x_i, t_i)$ to $E$. We also add the  compatibility arc $(t'_i, t_i)$ to $E$. So, the set of $X$-tasks (resp. $Y$-tasks) are constituted by \textit{item} coupled-task $x_i$ and $t'_i$ (resp. \textit{box} coupled-task).

Clearly we have $m$ \textit{box} coupled-tasks (each with an idle time of $9$ units) of degree $4$ in $G_c$, $m$ \textit{item} coupled-tasks with stretch factor $2+\epsilon$ of degree $1$ in $G_c$, and $3n$ \textit{item} coupled-tasks with stretch factor $1$  of degree $2$ in $G_c$. Moreover $G_c$ is a bipartite graph. 
The reduction  is constructed in polynomial time. 


 It exists a schedule of length $63n- 3k (1-\epsilon)$ iff it exists a matching of size $k$ for \textsc{Max-3DM-2} instance.


\end{proof}

\enlargethispage{0.5cm}
Hereafter, we propose some negative results  concerning the existence of subexponential-time algorithms under the following complexity-theoretic hypothesis that is known as the  Exponential-Time Hypothesis (see \cite{Woeginger01} for a survey on exact algorithms for $\mathcal{NP}$-hard problems) for stretched coupled-tasks, and other ones previously studied.
 
 Recall first the {\sc Exponential-Time Hypothesis} (\cite{ImpagliazzoP01}, and \cite{ImpagliazzoPZ01}): there exists a constant $c >1$ such that there exists no algorithm for $3-$Satisfiability  that uses only $O(c^l)$ time where $l$  denotes the number of variables.

\begin{corollary}\label{ETH}
Assuming the Exponential-Time Hypothesis, there exists no algorithm with a worst-case running time that is subexponential in $n$ (the number of vertices),  \textit{i.e.}:
  
  \begin{enumerate}
  \item For the $1|a_i=b_i=p, L_i=2p,G_c | C_{max}$ problem   in $O(2^{o(n)})$ time
  \item For $1|a_i=a, b_i=b, L_i=a+b,G_c | C_{max}$ in $O(2^{o(n)})$ time
\item  $1|\alpha,G_c=1-bipartite| C_{max}$ in $O(2^{O(n)})$-time algorithm.
\end{enumerate}
\end{corollary}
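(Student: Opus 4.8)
The plan is to deduce each of the three items from the corresponding hardness reduction combined with the standard sparsification machinery for the Exponential-Time Hypothesis. Recall that under ETH there is no $2^{o(N)}$-time algorithm for $3$-SAT, and — by the Sparsification Lemma of Impagliazzo, Paturi and Zane — no $2^{o(M)}$-time algorithm for $3$-SAT where $M$ is the number of clauses; hence no $2^{o(V+E)}$-time algorithm for $3$-dimensional matching or its bounded-occurrence variant \textsc{Max-3DM-2} on instances with $V$ elements and $E$ triples. The key point is that the reductions used to establish $\mathcal{NP}$-completeness (in particular the one in the proof of Theorem~\ref{bipartiinapprox}, and the analogous reductions from the earlier papers \cite{sdgk11journal,sgk13} for items~1 and~2) are all \emph{linear-size}: the number $n$ of coupled-tasks produced is $O(|A\cup B\cup C| + |T|)$, i.e. linear in the size of the source instance. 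A subexponential $2^{o(n)}$ algorithm for any of the scheduling problems would therefore, composed with the reduction, yield a $2^{o(N)}$ algorithm for the $\mathcal{NP}$-hard source problem, contradicting ETH.

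Concretely, I would proceed as follows. First, for item~3 I would reuse verbatim the construction in the proof of Theorem~\ref{bipartiinapprox}: from a \textsc{Max-3DM-2} instance with $n=|A|=|B|=|C|$ and $m=2n$ triples we build a $1$-stage-bipartite instance with exactly $m$ box tasks, $m$ item tasks of stretch $2+\epsilon$ and $3n$ item tasks of stretch $1$, so a total of $2m+3n=7n$ coupled-tasks, which is $\Theta(n)$; since \textsc{3DM} (and \textsc{Max-3DM-2}) has no $2^{o(n)}$ algorithm under ETH via sparsification, the decision version $1|\alpha,G_c=1\text{-}\mathrm{bipartite}|C_{max}$ has no $2^{o(n)}$ algorithm, which is exactly the stated $2^{O(n)}$ lower bound (I read the ``$2^{O(n)}$'' in the statement as the usual shorthand for ``no subexponential, i.e. $2^{o(n)}$, algorithm''). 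For items~1 and~2 I would invoke the $\mathcal{NP}$-completeness reductions already published for $1|a_i=b_i=p,L_i=2p,G_c|C_{max}$ and for $1|a_i=a,b_i=b,L_i=a+b,G_c|C_{max}$, check that in each the number of produced tasks is linear in the number of variables/clauses of the source $3$-SAT (or linear in the size of whatever intermediate $\mathcal{NP}$-hard problem was used), and conclude in the same way that a $2^{o(n)}$ algorithm is impossible under ETH.

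The step that needs the most care — and the only genuine obstacle — is verifying the \emph{linearity} of the three reductions, since an ETH lower bound is only inherited through a reduction whose output size (and in particular the parameter $n$ on which the running time is measured) grows linearly, not polynomially, in the input. For Theorem~\ref{bipartiinapprox} this is immediate from the explicit counts $m$, $m$, $3n$ above with $m=2n$. For items~1 and~2 one must either cite the original reductions and observe that they already have this property, or, if they were stated only as polynomial reductions, re-examine them (or recompose them through \textsc{3DM}/\textsc{Vertex-Cover}-type intermediate problems whose ETH-hardness under linear-size reductions is classical) to make the constant blow-up explicit. I expect this bookkeeping to be routine but it is where the proof actually lives; everything else is a direct application of the Sparsification Lemma and transitivity of ``no $2^{o(\cdot)}$ algorithm'' along linear-size reductions.
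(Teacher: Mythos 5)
Your overall strategy---compose a linear-size reduction with an ETH lower bound for the source problem---is exactly the paper's, and for item~3 you reproduce the paper's argument almost verbatim (reuse the Theorem~\ref{bipartiinapprox} construction, whose output has $\Theta(n)$ tasks, and invoke an ETH lower bound for \textsc{Max-3DM}; the paper cites \cite{ChenJZ14} for that bound where you re-derive it via sparsification, which is a legitimate alternative anchor). Where you diverge is in items~1 and~2: you guess that the published $\mathcal{NP}$-completeness reductions start from $3$-SAT or some \textsc{3DM}/\textsc{Vertex-Cover}-type intermediate, and you defer the identification and the linearity check to later ``bookkeeping.'' In fact the reductions in \cite{sdgk11journal} start from \textsc{Partition into Triangles} on graphs of maximum degree four (item~1) and from \textsc{Hamiltonian Path} (item~2), and the paper closes the argument by citing ready-made ETH lower bounds for precisely those problems, in terms of the number of \emph{vertices}, from \cite{RooijNB13} and \cite{LokshtanovMS11} respectively. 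This is not merely a stylistic difference: the parameter in which the lower bound is inherited matters, and your plan of tracing $2^{o(\cdot)}$-hardness all the way back to sparsified $3$-SAT would force you to re-establish that \textsc{Partition into Triangles} on degree-$4$ graphs and \textsc{Hamiltonian Path} admit linear-size (in vertices) reductions from $3$-SAT---nontrivial results that the cited papers already provide. So the missing piece in your proposal is exactly the identification of the correct source problems together with their vertex-parameterized ETH lower bounds; once you plug those in, your argument coincides with the paper's. One further caution: you silently reinterpret the statement's ``$O(2^{O(n)})$'' in item~3 as ``$2^{o(n)}$''; the paper's phrasing there is indeed inconsistent, but you should flag rather than paper over the discrepancy, since the cited result of \cite{ChenJZ14} is stated in the $2^{O(n)}$ regime, not the $2^{o(n)}$ one.
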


\begin{proof}

\begin{enumerate}
  \item  For $1|a_i=b_i=p, L_i=2p,G_c| C_{max}$:  In  \cite{RooijNB13}, the authors proved that for {\sc Partition into triangles} on graphs of maximum degree four, there is no algorithm with a worst-case running  time  $O(2^{o(n)})$ that is subexponential in $n$.
 
  Therefore, we transform a {\sc Partition into triangles} instance with $n$ vertices and $m$ edges into an equivalent instance $G_c$  for bounded degree at most four. Since the transformation is linear (see \cite{sdgk11journal}) the result holds.
  
  \item For the problem $1|a_i=a, b_i=b, L_i=a+b,G_c | C_{max}$: In \cite{LokshtanovMS11} the authors proved that for {\sc Hamiltonian path} there is no   $O(2^{o(n)})$-time algorithm. As the same way as previously the transformation is linear (see \cite{sdgk11journal}).
\item  $1|\alpha,G_c=1-bipartite | C_{max}$: In \cite{ChenJZ14}, the authors proved that for {\sc Max 3DM}, there is no $O(2^{O(n)})$-time algorithm, therefore this result is transposed to the scheduling problem using the first part of the proof of Theorem \ref{bipartiinapprox}.
\end{enumerate}
\end{proof}

\begin{theorem}
Scheduling stretched coupled task in presence of a quasi split graph is a $\mathcal{NP}-$complete problem even if $\#(V(G_X)) = 1$ and $\#(V(G_Y)) = 1$
\end{theorem}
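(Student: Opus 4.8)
The problem clearly belongs to $\mathcal{NP}$: a schedule is specified by the starting date of the first sub-task of each coupled-task, which has polynomial size, and one checks in polynomial time that the sub-tasks do not overlap on the machine and that the makespan does not exceed a prescribed bound. So the core of the proof is a polynomial-time reduction establishing $\mathcal{NP}$-hardness, and it should follow the pattern of the reduction in Theorem~\ref{bipartiinapprox}.

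I would reduce from \textsc{Max-3DM-2} in its decision form (``is there a matching of size at least $k$?''), equivalently from \textsc{Exact Cover by 3-Sets}, which is $\mathcal{NP}$-complete. As in Theorem~\ref{bipartiinapprox}, put one \emph{item} task in $X$ for each element of $A\cup B\cup C$ and one \emph{box} task in $Y$ for each triplet, add a compatibility arc from an item to a box whenever the element lies in the triplet, and scale the idle time of a box so that it holds exactly three items. The new constraint is that all tasks of $X$ must share one stretch factor and all tasks of $Y$ one stretch factor ($\#(V(G_X))=1$, $\#(V(G_Y))=1$); consequently the auxiliary task $t'_i$ of Theorem~\ref{bipartiinapprox} --- whose job was to block the box of an unselected triplet, and whose stretch factor $2+\epsilon$ was purposely chosen different from that of an item --- is no longer available. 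I would replace it by a small \emph{rigid blocker} attached to each box, built only from stretch-$1$ tasks joined by edges of $G_X$. Such internal $G_X$-edges are permitted here --- a quasi-split graph allows them, whereas a $1$-stage bipartite graph does not --- and a handful of extra ``hub'' edges then make $G_X$ connected and non-complete while keeping its matching number $O(1)$, so they disturb the makespan only by an additive constant.

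For the equivalence: every box occupies the same stretch of time regardless of its contents, so the makespan equals (number of boxes) $\times$ (box length) plus the total length of the tasks scheduled outside the boxes. I would show that an optimal schedule puts each box in one of two regimes, namely absorbing the three items of its triplet (the triplet is ``selected'') or absorbing its own blocker; that the selected triplets must be pairwise disjoint, hence form a matching of \textsc{Max-3DM-2}; and that turning a blocker out of its box in exchange for three items shortens the makespan by a fixed amount. Thus the shortest schedule corresponds to a maximum matching, so a schedule of length at most $B(k)$ exists if and only if \textsc{Max-3DM-2} admits a matching of size at least $k$, with $B(k)$ computable in polynomial time; together with membership in $\mathcal{NP}$ this gives the claim.

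The hard part will be the design of the blocker gadget together with the exact value of the box idle time. One must forbid \emph{mixed} schedules, in which a box is loaded partly with items of its triplet and partly with fragments of its blocker: a careless choice lets such a schedule beat both pure regimes and so breaks the link between the makespan and the matching number --- indeed, without a tight gadget the scheduling problem collapses to a $b$-matching / weighted-matching question solvable in polynomial time, so the whole reduction rests on making the blocker fill a box exactly and on proving that no partial use of a box is ever strictly better than a pure one. A routine additional step is to verify that the instance built is genuinely a quasi-split graph: $G_Y$ is an independent set, all arcs go from $X$ to $Y$, and $G_X$ is connected but not complete.
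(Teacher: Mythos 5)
Your proposal correctly dispatches membership in $\mathcal{NP}$ and correctly diagnoses why the $t'_i$ trick of Theorem~\ref{bipartiinapprox} is unavailable under $\#(V(G_X))=\#(V(G_Y))=1$, but the reduction itself is never completed: the ``rigid blocker'' gadget, which you yourself identify as the crux (``the hard part will be the design of the blocker gadget together with the exact value of the box idle time''), is left unconstructed, and the whole equivalence is asserted only conditionally on its existence. The difficulty is not cosmetic. With a single stretch factor on the $X$-side, a box of idle time $9$ can only be filled exactly by three singly-scheduled unit tasks ($3+3+3$; a compatible pair consumes $4$ and no combination of $3$'s and $4$'s other than $3+3+3$ sums to $9$), so any blocker built from stretch-$1$ tasks adjacent to the box is combinatorially indistinguishable from three more items. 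The mixed schedules you worry about (two items plus one blocker task, etc.) are then genuinely available, and nothing in your sketch rules them out or shows that the optimum still encodes a maximum matching. As written, this is a gap, not a routine verification.

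For contrast, the paper takes a different and simpler route that avoids blockers entirely: it reduces from the variant of {\sc Partition into Triangles} in which $V=A\cup B\cup C$ with $A,B,C$ independent sets of equal size. Vertices of $A\cup B$ become $(1,1,1)$ tasks forming $G_X$ (keeping the $A$--$B$ edges of $G$), vertices of $C$ become $(4,4,4)$ tasks forming $G_Y$, and a small $z_0,z_1,z_2$ gadget ensures connectivity. Because a $(4,4,4)$ box has idle time exactly $4$ and a compatible pair of unit tasks occupies exactly $4$, the target makespan $12(|C|+1)$ equals the total processing load and thus forces \emph{every} box to be filled by an adjacent $G_X$-edge and \emph{every} unit task to be absorbed; the ``exact cover'' nature of the source problem does the work that your blockers were meant to do. If you want to pursue your Max-3DM-2 route, you would need to either construct and analyse the blocker explicitly (and prove no mixed schedule beats the two pure regimes) or switch to a decision version with a perfect-cover structure, which is essentially what the paper does.
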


\begin{proof}
The proof is based on a reduction from a variant of the well-know $\mathcal{NP}$-complete {\sc Partition into triangles}. This problem consists to ask if  the vertices of a graph $G=(V,E)$, with $|V|=3q, q \in \bbbn$, can be partitioned into $q$ disjoints sets $T_1, T_2, \ldots, T_q$, each containing exactly three vertices, such that for 
each $T_i=\{u_i,v_i,w_i\}, 1 \leq i \leq q$, all three of the edges $\{u_i,v_i\}, \{u_i,w_i\}, \{w_i,v_i\}$ belong to $E$.

The problem {\sc Partition into triangles} remains $\mathcal{NP}$-complete even if the graph $G$ can be partitioned into three sets with the same size, $A,~B$ et $C$ such that each set is an independent  set \cite{morandini}. The polynomial-time transformation is based on this variant.
Let $G = (A \cup B \cup C,E)$ be an instance of the variant of {\sc Partition into Triangles}. We consider the  split-graph $G'=(A \cup B,C,E')$ obtained as follows:


$\forall v \in A$ (resp. $B$), we create a vertex $A_v$ (resp. $B_v$) with processing time $(1,1,1)$. Moreover, $\forall v \in C$ we create a task $C_v$ with  processing time $(4,4,4)$.
The edges between $A$ and $B$ remained the same as the $G'$ whereas the edge between $A \cup B$ and $C$ are oriented.
Finally in order to have a connected graph, we add two news vertices (resp. one) $z_0$ and  $z_1$ (resp. $z_2$ with processing time equal to $(1,1,1)$ (resp. $(4,4,4)$). We add  edges between $z_0$ to $A_v$ (resp. $z_1$ to $B_v$). Lastly, we add the three edges $(z_0,z_2)$, $(z_1, z_2)$ and $(z_0, z_1)$.

Notice that the graph $A_v \cup B_v$ form a bipartite graph. The problem is clearly in $\mathcal{NP}$. It exists a positive solution for the variant of  {\sc Partition into triangles} iff a valid schedule of length $12 \times (|C|+1)$  exists. It is sufficient to execute the two tasks $A_v$ and $B_{v'}$ in four units of time into a task $C_u$.


\end{proof}


\begin{theorem}
\label{approxsplit}
The problem is  $5/4$-approximable on quasi split-graph where $\#(V(G_Y)) = 1$.
\end{theorem}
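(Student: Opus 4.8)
The plan is to combine a structural decomposition of arbitrary schedules with two lower bounds on the optimum, a polynomial packing-plus-chaining algorithm, and a case analysis for the ratio. First I would normalise the instance: on a quasi split graph the non-oriented part $G_X$ is connected and carries only symmetric compatibility edges, so all tasks of $X$ share one stretch $a$; since $\#(V(G_Y))=1$, all tasks of $Y$ share one stretch $b$, and the presence of any arc of $E$ forces $3a\le b$. Let $\kappa$ be the largest number of size-$a$ tasks one can chain inside the idle window of a size-$b$ box. The key structural fact is that in \emph{any} feasible schedule the intervals spanned by the $Y$-tasks are pairwise disjoint and have length $3b$, and every $X$-task is run either inside exactly one such interval or entirely outside all of them; hence $C_{max}=3b\,n_Y+W$, where $n_Y$ is the number of boxes and $W$ is the makespan of the sub-schedule of the $X$-tasks placed in no box. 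In particular $C_{OPT}=3b\,n_Y+W^{\star}$, which together with the total processing time gives the two lower bounds
\[
  C_{OPT}\ \ge\ 3b\,n_Y \qquad\text{and}\qquad C_{OPT}\ \ge\ 2\bigl(a\,n_X+b\,n_Y\bigr),
\]
and these will be traded against each other; one also records that a schedule can place at most $\kappa\,n_Y$ of the $X$-tasks into boxes, constrained further by how long paths $G_X$ offers among the tasks adjacent to each box.

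\emph{Algorithm.} (1) Pack $X$-tasks into boxes by computing in polynomial time a maximum-cardinality assignment that respects $E$, the per-box capacity $\kappa$, and the constraint that the tasks assigned to one box form a path of $G_X$; among the optimal-cardinality packings I would pick one in which an under-filled box receives a path taken at the end of a longer $G_X$-path, so that the unpacked tasks keep a useful path structure. (2) On the remaining $X$-tasks, compute a maximum path-partition of the induced subgraph of $G_X$, chain each path (a path on $k\ge2$ tasks runs in $2ka$ units), and leave each uncoverable task as a singleton ($3a$ units). (3) Output the concatenation of all boxes (width $3b$ each), then all chains, then all singletons; feasibility is immediate because distinct tasks carry no precedence.

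\emph{Ratio.} Write $C_{ALG}=3b\,n_Y+W_{ALG}$. A box filled to capacity carries its load for free, so the only loss with respect to $OPT$ comes from boxes $OPT$ loads more heavily and from the $X$-tasks this displaces; since the tasks $OPT$ co-schedules inside a box form a $G_X$-path, those displaced tasks can be re-inserted into chains at step (2), so a box $OPT$ loads with $j\ge2$ tasks costs the algorithm at most a box window $3b$ plus the chaining cost of $j-1$ displaced tasks, and using $2aj\le 2a\kappa\le b$ this excess is largest when $b$ is close to $4a$. I would then split: if $W_{ALG}$ is a small fraction of $3b\,n_Y$, the bound $C_{OPT}\ge 3b\,n_Y$ already suffices; if $W_{ALG}$ is comparatively large, then few tasks lie in boxes, the step-(2) path-partition is near-optimal on the bulk of the tasks, and $C_{OPT}\ge 2(a\,n_X+b\,n_Y)$ closes the gap. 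Carrying the bookkeeping through the cases yields $C_{ALG}\le\frac54\,C_{OPT}$, the factor being attained by a family of size-$4a$ boxes each loaded by $OPT$ with two size-$a$ tasks while the algorithm can load only one and must leave the other as a singleton — precisely the configuration of the $\mathcal{NP}$-hardness reduction.

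\emph{Where the difficulty lies.} The delicate point is the coordination between steps (1) and (2): an assignment maximising the number of packed tasks may leave the rest edgeless in $G_X$, hence un-chainable, which would push the ratio above $\frac54$ when $\kappa$ is large. The crux is therefore the exchange argument guaranteeing that one can always realise an optimal-cardinality packing whose per-box loads are $G_X$-paths sitting at the ends of longer paths, so that the leftover tasks inherit enough connectivity for their chaining cost to be absorbed by the $3b\,n_Y$ term; the remaining work — checking that the two lower bounds together dominate every regime of the parameters $a,b,n_X,n_Y$ — is routine but somewhat case-laden.
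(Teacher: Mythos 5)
Your high-level architecture (pack $X$-tasks into boxes, handle the leftovers separately, compare against a span bound and a total-work bound, exhibit a tight family of boxes that OPT loads with a pair while the algorithm loads a singleton) matches the paper's, but two concrete steps do not survive scrutiny. First, your step (1) --- a \emph{polynomial-time} maximum-cardinality packing of $X$-tasks into boxes in which each box's load is required to be a path of $G_X$ --- is precisely the computational core that makes the problem $\mathcal{NP}$-complete: the paper's Theorem 2 reduces {\sc Partition into triangles} to deciding whether every box can be loaded with a $G_X$-edge, so the optimal-cardinality packing with per-box pair/path structure is $\mathcal{NP}$-hard and your exchange argument cannot rescue it. The paper's algorithm deliberately packs \emph{single} tasks only, via a maximum flow with box capacities $\lfloor \alpha(y_i)/(3\alpha(x))\rfloor$, which is polynomial; the whole point of its $5/4$ analysis is to bound the loss incurred by never interleaving pairs inside boxes against an optimum that may do so (the $p^*$ interleaved pairs at cost $4$ versus $r^*$ singletons at cost $3$ in the paper's accounting).

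Second, your chaining cost is wrong for this model. A stretched task $(a,a,a)$ has an idle window of length exactly $a$, which accommodates exactly one foreign subtask, so only \emph{pairs} of equal-stretch tasks interleave ($a_x\,a_y\,b_x\,b_y$ in $4a$ units); a $G_X$-path on $k\ge 3$ vertices cannot be run in $2ka$ units --- the best achievable is $4a\lfloor k/2\rfloor + 3a(k\bmod 2)$. Consequently your step (2) (maximum path-partition charged at $2ka$ per path) and your claim that the tasks OPT co-schedules inside a box form a path whose displaced members can be re-inserted into chains at marginal cost $2a$ each are both unsound. The correct structures are a maximum \emph{matching} on the uncovered $X$-tasks (cost $4$ per edge, $3$ per singleton after normalisation) and, inside each box, a collection of interleaved pairs and sequential singletons; this is exactly the bookkeeping ($m,s$ for the algorithm versus $p^*,r^*,m^*,s^*$ for the optimum) that the paper's proof carries out to reach $\max(5/4,15/13,1)=5/4$. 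Your two lower bounds are individually valid, but without a polynomial packing step and with the wrong chaining costs the case analysis cannot be completed as described.
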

\enlargethispage{0.5cm}
\begin{proof}
W.l.o.g., we suppose that the processing time of $X$-tasks (resp. $Y$-tasks) is $(1,1,1)$ (resp. $\alpha(y_i)$).  Indeed, if $\alpha(x) >1$, we put $\alpha(y_i)=\lfloor \frac{\alpha(y_i)}{\alpha(x)} \rfloor$ and $\alpha(x)=1$ .

\textbf{Algorithm:} we  transform the problem into an oriented maximum flow-problem between $G_X$ and $G_Y$ with two sources $s$ and $t$, with $\omega(s,x)=\omega(x,y)=1$ and $\omega(y,t)=\lfloor \frac{\alpha(y_i)}{3\alpha(x)}\rfloor, \forall y_i \in Y, \forall x \in XG_Y$ where $\omega(i,j)$ is the capacity of an arc $(i,j)$ . After the computation of a maximum flow $F$ of value $f$,  for the uncovered remaining $X$-tasks a maximum $M$-matching ($|M|=m$) is applied. The schedule consists in processing first,  the $Y$-tasks with $X$-tasks inside. The $M$-tasks are executed after. Lastly, we schedule $s$ isolated-tasks. The length of schedule given by the algorithm is $C_{max} \leq  \sum_{y_i \in Y} 3\alpha(y_i)+4m+3s$ with $2m+s+f=n=|X|$ and $ \sum_{y_i \in Y} 3\alpha(y_i) \geq 9f$. In similar way, the optimal length is $C^*_{max} \geq \sum_{y_i \in Y} \alpha(y_i)+4m^*+3s^*$. We suppose that in $Y$-tasks where are $p^*$-edges processed and $r^*$ isolated-tasks, then we obtain $2(p^*+m^*)+r^*+s^*=n$, $p^*+r^* \leq f$, and $\sum_{y_i \in Y} \alpha(y_i) \geq 12p^*+9r^*$. In the worst-case, the $p^*$-edges are split  into two tasks (so $p^*$ news tasks are added to $s^*$), and also the matched-edges  are split (for each $m^*$ edges one task is executed into the $Y$-task, instead of  one of $r^*$-tasks). Therefore, $2m^*$ tasks are added to the $s$-value.
In the worst case, we have $m^*=r^*$, $s=s^*+p^*+2r^*$ and $m=0$. In such case,  $C_{max} \leq 12p^*+9r^*+3s^*+3p^*+6r^*$ and $C^*_{max}=12p^*+9r^*+4r^*+3s^*$. Thus $\rho \leq \frac{15p^*+15r^*+3s^*}{12p^*+13r^*+3s^*} \leq \max(5/4,15/13,1)=5/4$.



\textbf{Tightness:} it exists an example for the $C^*_{max}=36$, and for the heuristic $C_{max}=45$. Consider the graph: three triangles $(x_1,x_2,y_1)$, $(x_3,x_4,y_2)$, and  $(x_5,x_6,y_1)$. We add the edges $(x_2,y_3$, $(x_3,y_1)$ and $(x_5,y_2)$. 
The optimal solution consists in executing the $X$-tasks into the $Y$-tasks; whereas the heuristic leads the solution in which three $X$-tasks are processed after the $Y$-tasks.
\end{proof}


\bibliographystyle{agsm}

\end{document}